\definecolor{myviolet}{rgb}{.82, .0 ,.76}
\definecolor{myorange}{rgb}{1., .25 ,.0}
\definecolor{myblue}{rgb}{.01, .72 ,.625}
\definecolor{mygrey}{rgb}{.6, .6 ,.6}
\definecolor{myyellow}{rgb}{.75, .8 ,.0}
\definecolor{mygreen}{rgb}{.2, .8 ,.0}
\definecolor{mybrown}{rgb}{.68,.3,.1}
\definecolor{OliveGreen}{rgb}{.0, 0.36, 0.16}
\definecolor{mypink}{rgb}{.9, .0 ,.4}
\renewcommand{\equiv}{=}
\newcommand{\A}{\mathcal{A}}
\newcommand{\Q}{\mathcal{Q}}
\def \R {\mathcal{R}}
\renewcommand{\varepsilon}{\epsilon}
\newcommand{\epsifree}{\smash{\not}\varepsilon}
\newcommand{\Lang}{\mathcal{L}}
\newcommand{\Lange}{\mathcal{L}^{\epsifree}}
\newcommand{\desc}{\R^*}
\newcommand{\edesc}{\R_E^*}
\newcommand{\timbuk}{{\sf Timbuk}}
\newcommand{\F}{{\cal F}}
\newcommand{\myF}{{\cal K}}
\newcommand{\X}{{\cal X}}
\newcommand{\C}{{\cal C}}
\newcommand{\D}{{\cal D}}
\newcommand{\T}{{\cal T}}
\newcommand{\TF}{{\cal T(F)}}
\newcommand{\TFX}{{\cal T(F, X)}}
\newcommand{\TFQ}{{\cal T(F \cup Q)}}
\newcommand{\TC}{{\cal T(C)}}
\newcommand{\et}{\mbox{ and }}
\newcommand{\sth}{\mbox{ s.t. }}
\def\langtermesclos#1{\T(#1)}
\def\langtermesaux(#1,#2){\T(#1, #2)}
\def\sorts{\mathcal S}
\def\myTF{\langtermesclos\myF}
\def\myTFX{\langtermesaux(\myF,\X)}
\def\myTFs{\myTF^\sorts}
\def\ETF{E^{r}}
\def\Ec{E^c}
\def\Ecpy{E^c_{\myF}}
\def\Ecps{E^c_{{\myF,\sorts}}}
\def\Ecf{E^c_{\F}}
\def\Ecc{E^c_{\C}}
\def\Eccs{E^c_{\C,\sorts}}
\def\TFS{\TF^\sorts}
\def\TCS{\TC^\sorts}
\def\TFXS{\TFX^\sorts}
\DeclareMathOperator{\lirr}{I\textsc{rr}}
\newcommand{\airr}[1]{\A_{\lirr(#1)}}
\def\ERcpy{\overrightarrow{\Ecpy}}
\def\ERcps{\overrightarrow{\Ecps}}
\def\ERcf{\overrightarrow{\Ecf}}
\def\ERcc{\overrightarrow{\Ecc}}
\def\ERccs{\overrightarrow{\Eccs}}
\newcommand{\rw}{\rightarrow}
\newcommand{\simp}{\leadsto}
\newcommand{\Norm}{Norm}
\newcommand{\nr}{E}
\newcommand{\comp}{{\cal C}}
\def\simpleq#1#2{\mathcal S_{#1}\left(#2\right)}
\newcommand{\aapprox}{{\cal A}_{\R, \nr}}
\newcommand{\aaex}{{\cal A}_{\R}}
\def\build#1_#2^#3{\mathrel{
    \mathop{\kern 0pt#1}\limits_{#2}^{#3}}}
\def\CM#1{\build\hbox to 10mm {\rightarrowfill}_{}^{CM#1}}
\newcommand{\rws}[3]{\mathrel{{\build{\rightarrow}_{#1}^{#2}}\mskip-2mu_{#3}}}
\newcommand{\rwne}{\rightarrow^{\not\epsilon\: *}_{\simpleq E \A}}
\newcommand{\rwDne}{\rightarrow^{\not\epsilon}_{\Delta}}
\newcommand{\rwD}{\rightarrow_{\Delta}}
\newcommand{\rwAnestar}{\rightarrow^{\not\epsilon\: *}_{\A}}
\newcommand{\rwAznestar}{\rightarrow^{\not\epsilon\: *}_{\A_0}}
\newcommand{\rwDnestar}{\rightarrow^{\not\epsilon\: *}_{\Delta}}
\newcommand{\rwRE}{\rightarrow_{\R/E}}
\newcommand{\pos}{{\mathcal Pos}}
\newcommand{\var}{{\mathcal Var}}
\newcommand{\rwR}{\rws{}{}{\R}}
\newcommand{\rwA}{\rws{}{}{\A}}
\newfont{\amstoto}{msbm10}
\newcommand{\NN}{\mbox{\amstoto\char'116}}
\newcommand{\sep}{\; | \;}
\newcommand{\aut}{\langle \F, \Q, \Q_f, \Delta \rangle}
\lstdefinelanguage{yswhile}%
{morekeywords={if,else,while,then,endif,do,done, true, false, skip, Sequence, var, input, output},%
   sensitive=true,%
   morecomment=[n]{//}{},%
   morestring=[d]"%
  }[keywords,comments,strings]%
\def\arobase{\symbol{64}}
\theoremstyle{plain}
\newtheorem*{lemma*}{Lemma}
\newtheorem*{theorem*}{Theorem}
\author{Thomas Genet}
\title{Towards Static Analysis of Functional Programs
\\ using Tree Automata Completion\thanks{This paper has been published in the
  Workshop on Rewriting Logic and Applications\cite{Genet-WRLA14}.}}
\institute{INRIA/IRISA, Universit\'e de Rennes, France\\
{\tt genet@irisa.fr}}
\begin{document}

\maketitle

\begin{abstract}
This paper presents the first step of a wider research effort to 
apply tree automata completion to the static analysis of functional programs. 
Tree Automata Completion is a family of techniques for computing or
approximating the set of terms reachable by a rewriting relation. 
The completion algorithm we focus on is parameterized by a set $E$ of equations
controlling the precision of the approximation and influencing its termination.
For completion to be used as a static analysis, the first step is to guarantee
its termination. In this work, we thus give a sufficient condition on $E$ and
$\TF$ for completion algorithm to always terminate. 
In the particular setting of functional programs, this condition can be 
relaxed into a condition on $E$ and $\TC$ (terms built on the set of
constructors) that is closer to what is done in the field of static analysis,
where abstractions are performed on data.
\end{abstract}

\section{Introduction}
Computing or approximating the set of terms reachable by rewriting
has more and more applications. For a Term Rewriting System (TRS) $\R$ and a
set of terms $L_0\subseteq \TF$, the set of reachable terms is $\desc(L_0)=\{t \in
\TF \sep \exists s\in L_0, s \rwR^* t\}$. This set can be computed exactly for
specific classes of $\R$~\cite{Genet-Habil} but, in general, it has to be
approximated. Applications of the approximation of $\desc(L_0)$ are ranging from
cryptographic protocol verification~\cite{avispa}, to static
analysis of various programming languages~\cite{BoichutGJL-RTA07}
or to TRS termination proofs~\cite{GeserHWZ-RTA05}. Most of
the techniques compute such approximations using tree automata as the core
formalism to represent or approximate the (possibly) infinite set of terms
$\desc(L_0)$. Most of them also rely on a Knuth-Bendix completion-like algorithm
completing a tree automaton $\A$ recognizing $L_0$ into an automaton $\A^*$
recognizing exactly, or over-approximating, the set $\desc(L_0)$. As a result,
these techniques can be refered as {\em tree automata completion}
techniques~\cite{Genet-RTA98,Takai-RTA04,FeuilladeGVTT-JAR04,BoichutCHK-IJFCS09,GenetR-JSC10,Lisitsa-RTA12}.
A strength of this algorithm, and at the same time a weakness, is that its
precision is parameterized by a function~\cite{FeuilladeGVTT-JAR04} or a set of
equations~\cite{GenetR-JSC10}. It is a strength because tuning the approximation
function (or equations) permits to adapt the precision of completion to a
specific goal to tackle. This is what made it successful for program and
protocol verification. On the other hand, this is a weakness because it is
difficult to guarantee its termination.

\noindent
In this paper, we define a simple
sufficient condition on the set of equations for the 
tree automata completion algorithm to terminate. This condition, which is strong
in general, reveals to be natural and well adapted for the approximation of
reachable terms when TRSs encode typed functional programs. We thus obtain a way to automatically
over-approximate the set of all reachable program states of a functional 
program, or even restrict it to the set of all results. Thus we can
over-approximate the image of a functional program.

\section{Related work}

{\em Tree automata completion}. 
With regards to most papers about
completion~\cite{Genet-RTA98,Takai-RTA04,FeuilladeGVTT-JAR04,BoichutCHK-IJFCS09,GenetR-JSC10,Lisitsa-RTA12},
our contribution is to give the first criterion {\em on the approximation} for the
completion to terminate. Note that it is possible to guarantee termination of
the completion by inferring an approximation adapted to the TRS under
concern, like in~\cite{OehlCKS-FASE03}. In this case, given a TRS, the approximation is
fixed and unique. Our solution is more flexible because it lets the user change
the precision of the approximation while keeping the termination
guarantee. In~\cite{Takai-RTA04}, T.~Takai have a completion parameterized by a
set of equations. He also gives a termination proof for its completion but only
for some restricted classes of TRSs. Here our termination proof holds for any 
left-linear TRS provided that the set of equations satisfy some properties.

\medskip
\noindent {\em Static analysis of functional programs}.  With regards to static
analysis of functional programs using grammars or automata, our contribution is
in the scope of data-flow analysis techniques, rather than control-flow
analysis. More precisely, we are interested here in predicting the results of a
function~\cite{OngR-POPL11}, rather than predicting the control
flow~\cite{Kobayashi-JACM13}. Those two papers, as well 
as many other ones, deal with higher order functions using complex higher-order
grammar formalisms (PMRS 
and HORS). Higher-order functions are not in the scope of the solution we
propose here. However, we obtained some preliminary results suggesting that an
extension to higher order functions is possible and gives relevant results (see
Section~\ref{sec:further}). Furthermore, using equations, approximations are
defined in a more declarative and flexible way than in~\cite{OngR-POPL11}, where
they are defined by a dedicated algorithm. Besides, the verification mechanisms
of~\cite{OngR-POPL11} use automatic abstraction refinement. This can be also
performed in the completion setting~\cite{BoichutBGL-ICFEM12} and adapted to
the analysis of functional programs~\cite{GenetS-rep13}. Finally, using a simpler
(first order) formalism, {\em i.e.} tree automata, makes it easier to take into
account some other aspects like: evaluation strategies and built-ins types (see
Section~\ref{sec:further}) that are not considered by those papers.

\section{Background}
\label{sec:bg}

In this section, we introduce some definitions and concepts that will
be used throughout the rest of the paper (see
also~\cite{BaaderN-book98,TATA}). Let $\F$ be a
finite set of symbols, each associated with an arity function, and let
$\X$ be a countable set of {\em variables}. $\TFX$ denotes the set of
{\em terms} and $\TF$ denotes the set of {\em ground terms} (terms
without variables). The set of variables of a term $t$ is denoted by
$\var(t)$. A {\em substitution} is a function $\sigma$ from $\X$ into
$\TFX$, which can be uniquely extended to an endomorphism of $\TFX$. A
{\em position} $p$ for a term $t$ is a finite word over $\NN$. The empty
sequence $\lambda$ denotes the top-most position. The set $\pos(t)$ of
positions of a term $t$ is inductively defined by $\pos(t)= \{
\lambda\} $ if $t \in \X$ or $t$ is a constant and $\pos(f(t_1,\dots,t_n)) = \{ \lambda \}
\cup \{i.p \mid 1 \leq i \leq n \et p \in \pos(t_i) \}$ otherwise.  If
$p \in \pos(t)$, then $t|_p$ denotes the subterm of $t$ at position
$p$ and $t[s]_p$ denotes the term obtained by replacement of the
subterm $t|_p$ at position $p$ by the term $s$.

A {\em term rewriting system} (TRS) $\R$ is a set of {\em rewrite
  rules} $l \rw r$, where $l, r \in \TFX$, $l \not \in \X$, and
$\var(l) \supseteq \var(r)$.  A rewrite rule $l \rw r$ is {\em
  left-linear} if each variable of $l$ occurs only once in $l$.  A TRS $\R$ is left-linear if
every rewrite rule $l \rw r$ of $\R$ is left-linear.  The TRS
$\R$ induces a rewriting relation $\rw_{\R}$ on terms as follows. Let
$s, t\in \TFX$ and $l \rw r \in \R$, $s \rw_{\R} t$ denotes that there
exists a position $p\in\pos(s)$ and a substitution $\sigma$ such that
$s|_p= l\sigma$ and $t=s[r\sigma]_p$.
Given a TRS $\R$, $\F$ can be split into two disjoint sets $\C$ and $\D$. All symbols
occurring at the root position of left-hand sides of rules of $\R$ are in
$\D$. $\D$ is the set of defined symbols of $\R$, $\C$ is the set of
constructors. Terms in $\TC$ are called {\em data-terms}.  
The reflexive transitive closure of $\rw_{\R}$ is denoted by
$\rw^*_{\R}$ and $s \rw^!_\R t$ denotes that $s\rw^*_\R t$ and $t$ is
irreducible by $\R$. The set of irreducible terms w.r.t. a TRS $\R$ is denoted
by $\lirr(\R)$.
The set of $\R$-descendants of a set of ground
terms $I$ is $\desc(I) = \{t \in \TF \sep \exists s \in I \sth s
\rw^*_{\R} t \}$.  
A TRS $\R$ is sufficiently complete if for all
$s\in\TF$,
$(R^*(\{s\})\cap\TC)\neq\emptyset$. 

An {\em equation set} $E$ is a set of {\em
  equations} $l = r$, where $l, r \in \TFX$.  
The relation $=_E$ is the smallest congruence such that for all
substitution $\sigma$ we have $l\sigma =_E r\sigma$. Given a TRS $\R$
and a set of equations $E$, a term $s\in\TF$ is rewritten modulo $E$
into $t\in\TF$, denoted $s \rwRE t$, if there exist $s'\in\TF$ and
$t'\in\TF$ such that $s=_Es'\rw_\R t'=_E t$. The reflexive transitive closure
$\rwRE^*$ of $\rwRE$ is defined as usual except that reflexivity is extended to
terms equal modulo $E$, {\em i.e.} for all $s, t\in\TF$ if $s=_E t$ then $s \rwRE^* t$.
The set of
$\R$-descendants modulo $E$ of a set of ground terms $I$ is $\edesc(I)
= \{t \in \TF \sep \exists s \in I \sth s \rwRE^* t \}.$

Let $\Q$ be a countably infinite set of symbols with arity $0$, called {\em states}, such
that $\Q \cap \F= \emptyset$.  $\TFQ$ is called the set of {\em
  configurations}. A {\em transition} is a rewrite rule $c \rw q$, where $c$ is
a configuration and $q$ is state.  A transition is {\em normalized} when $c =
f(q_1, \ldots, q_n)$, $f \in \F$ is of arity $n$, and $q_1, \ldots, q_n \in
\Q$. An {\em $\varepsilon$-transition} is a transition of the form $q \rw q'$
where $q$ and $q'$ are states. 
A bottom-up non-deterministic finite tree automaton ({\em tree automaton} for short)
over the alphabet $\F$ is a tuple $\A= \langle \F, \Q, \Q_F,\Delta \rangle$,
where $\Q_F$ is a finite subset of $\Q$, $\Delta$ is a finite set of normalized transitions and
$\varepsilon$-transitions. The transitive and reflexive {\em rewriting relation}
on $\TFQ$ induced by the set of transitions $\Delta$ (resp. all transitions
except $\varepsilon$-transitions) is denoted by $\rw_{\Delta}^*$
(resp. $\rwDnestar$). When $\Delta$ is attached to a tree automaton $\A$ we also
note those two relations $\rwA^*$ and $\rwAnestar$, respectively. A tree
automaton $\A$ is complete if for all $s\in\TF$ there exists a state $q$ of $\A$
such that $s \rwA^* q$. The language (resp. $\epsifree$-language) recognized by
$\A$ in a state $q$ is $\Lang(\A,q) = \{t \in \TF \sep t \rw^*_\A q \}$
(resp. $\Lang^{\epsifree}(\A,q) = \{ t \in \TF \sep t \rwAnestar q\}$). A state
$q$ of an automaton $\A$ is {\em reachable} (resp. $\epsifree$-reachable) if
$\Lang(\A,q)\neq \emptyset$ (resp. $\Lange(\A,q)\neq \emptyset$). 
We define $\Lang(\A)
= \bigcup_{q \in \Q_F} \Lang{}(\A, q)$.  
A set of transitions $\Delta$ is
$\epsifree$-deterministic if there are no two normalized transitions in $\Delta$
with the same left-hand side.  A tree automaton $\A$ is
$\epsifree$-deterministic if its set of transitions is $\epsifree$-deterministic.
Note that if $\A$ is $\epsifree$-deterministic then for all states $q_1, q_2$ of
$\A$ such that $q_1\neq q_2$, we have $\Lange(\A,q_1)\cap
\Lange(\A,q_2)=\emptyset$.

\section{Tree Automata Completion Algorithm}
\label{chap:theo}

Tree Automata Completion algorithms were proposed
in~\cite{Jacquemard-RTA96,Genet-RTA98,Takai-RTA04,GenetR-JSC10}. They
are very similar to a Knuth-Bendix completion except that they run on two distinct
sets of rules: a TRS $\R$ and a set of transitions $\Delta$ of a tree automaton
$\A$. 


\newcommand{\oldcomp}{{\cal C}_{\R,\alpha}}
\renewcommand{\nr}{\alpha}
\renewcommand{\comp}{{\cal C}_{\R,\alpha}}
\renewcommand{\desc}[1]{#1^{*}}

Starting from a tree automaton $\A_0=\langle \F, \Q, \Q_f,
\Delta_0\rangle$ and a left-linear
TRS $\R$, the algorithm computes a tree automaton $\A'$
such that $\Lang(\A') =\desc{\R}(\Lang(\A_0))$ or $\Lang(\A') \supseteq
\desc{\R}(\Lang(\A_0))$. 
The algorithm iteratively computes tree automata $\aaex^1$,
$\aaex^2$, \ldots such that $\forall i\geq 0: \Lang(\aaex^i) \subseteq
\Lang(\aaex^{i+1})$ until we get an automaton $\aaex^k$ with $k \in \NN$
and $\Lang(\aaex^k) = \Lang(\aaex^{k+1})$. For all $i\in\NN$, if $s \in
\Lang(\aaex^i)$ and $s \rwR t$, then $t\in \Lang(\aaex^{i+1})$. Thus, if
$\aaex^k$ is a fixpoint then it also verifies $\Lang(\aaex^k) \supseteq \desc{\R}(\Lang(\A_0))$.
To construct $\aaex^{i+1}$ from $\aaex^i$, we achieve a {\em completion step}
which consists in finding {\em critical pairs} between $\rw_{\R}$ and
$\rw_{\aaex^i}$. A critical pair is a triple $(l\rw r, \sigma, q)$ where $l \rw r
\in \R$, $\sigma:\X \mapsto \Q$ and $q\in\Q$ such that $l \sigma
\rw^{*}_{\aaex^{i}} q$ and $r \sigma \not \rw^{*}_{\aaex^{i}} q$.  For $r\sigma$ to be recognized by the same state and
thus model the rewriting of $l\sigma$ into $r\sigma$, it is enough to add the
necessary transitions to $\aaex^i$ to obtain $\aaex^{i+1}$ such that $r\sigma
\rw^*_{\aaex^{i+1}} q$. In~\cite{Takai-RTA04,GenetR-JSC10}, critical pairs are
joined in the following way: 
$$
\xymatrix{
 l\sigma \ar[r]_{\R}\ar[d]_{\aaex^i} & r\sigma \ar[d]^{\aaex^{i+1}}\\
 q & \ar[l]^{\aaex^{i+1}} q'
}
$$

\noindent
From an algorithmic point of view, there remains two problems to solve: find
all the critical pairs $(l\rw r, \sigma, q)$ and find the transitions
to add to $\aaex^i$ to have $r\sigma \rw_{\aaex^{i+1}}^* q$. The first
problem, called matching, can be efficiently solved using a specific
algorithm~\cite{FeuilladeGVTT-JAR04,Genet-Habil}. The second problem is solved
using Normalization.

\label{sec-tac}
\renewcommand{\oldcomp}{{\cal C}_{\R,\alpha}}
\renewcommand{\nr}{\alpha}
\renewcommand{\comp}{{\cal C}_{\R,\alpha}}
\renewcommand{\desc}[1]{#1^{*}}
\renewcommand{\desc}{\R^*}
\renewcommand{\nr}{E}
\renewcommand{\comp}{{\cal C}}

\subsection{Normalization}
\label{sec-norm}
The normalization function replaces subterms either by states of $\Q$
(using transitions of $\Delta$) or by new states. A state $q$ of $\Q$ is used to
normalize a term $t$ if $t \rwDne q$. Normalizing by reusing states of
$\Q$ and transitions of $\Delta$ permits to preserve the $\epsifree$-determinism of
$\rwDne$. Indeed, $\rwDne$ can be kept deterministic during completion though
$\rwD$ cannot.

\begin{definition}[New state]
  Given a set of transitions $\Delta$, a new state (for $\Delta$) is a state 
of $\Q\setminus\Q_f$ not occurring in left or right-hand sides of rules of
  $\Delta$~\footnote{Since $\Q$ is a countably infinite set of 
states, $\Q_f$ and $\Delta$ are finite, a new state can always be found.}.
\end{definition}

\noindent
We here define normalization as a bottom-up process. This definition is simpler and
equivalent to top-down definitions~\cite{GenetR-JSC10}. In the recursive call,
the choice of the context $C[\,]$ may be non deterministic but all the
possible results are the equivalent modulo state renaming. 

\begin{definition}[Normalization]
\label{def:normalization}
Let $\Delta$ be a set of transitions defined on a set of states $\Q$, $t \in
\TFQ \setminus \Q$. Let $C[\;]$ be a non empty context of
$\TFQ\setminus \Q$, $f\in \F$ of arity 
$n$, and $q,q',q_1, \ldots, q_n\in\Q$. The normalization function is inductively
defined by:
\begin{enumerate}
\item $\Norm_{\Delta}(f(q_1, \ldots, q_n) \rw q) = \{ f(q_1, \ldots, q_n) \rw
  q\}$

\item $\Norm_{\Delta}(C[f(q_1, \ldots, q_n)] \rw q) =$ 
\begin{tabular}[t]{l}
    $ \{ f(q_1, \ldots, q_n) \rw q'\} \: \cup$ \\
    $\Norm_{\Delta\cup\{f(q_1, \ldots, q_n) \rw  q'\}} (C[q'] \rw q)$
\end{tabular} 

where either ($f(q_1, \ldots, q_n) \rw q' \in \Delta$) or ($q'$ is a new state for
$\Delta$ and
$\forall q'' \in Q : f(q_1, \ldots, q_n) \rw q'' \not \in \Delta$).
\end{enumerate} 
\end{definition}

\noindent
In the second case of the definition, if there are several states $q'$ such that $f(q_1,\ldots,q_n) \rw
q' \in \Delta$, we arbitrarily choose one of them. We illustrate the above definition on the
normalization of a simple transition.
\begin{example}
Given $\Delta=\{b \rw q_0\}$, 
$\Norm_\Delta(f(g(a), b, g(a))\rw q) = \{a \rw q_1, g(q_1) \rw q_2, b \rw q_0, f(q_2, q_0, q_2) \rw q\}$
\end{example}

\subsection{One step of completion}
\label{sec:eqcompletionalgo}

A step of completion only consists in joining critical pairs. We first need to
formally define the substitutions under concern: {\em state substitutions}.

\begin{definition}[State substitutions, $\Sigma(\Q,\X)$]
\label{def:qsubst}
A {\em state substitution} over an automaton
$\A$ with a
set of states $\Q$ is a function $\sigma: \X \mapsto \Q$. We can extend this definition to a morphism
$\sigma: \TFX \mapsto \mathcal{T}(\mathcal{F},\mathcal{Q})$.
We denote by $\Sigma(\Q, \X)$ the set of state substitutions built over $\Q$ and $\X$.
\end{definition}

\begin{definition}[Set of 
critical pairs]
Let a TRS $\R$ and a tree automaton $\A=\aut$. The set of 
critical
pairs between $\R$ and $\A$ is $CP(\R,\A)=\{(l\rw r, \sigma, q) \sep l
\rw r \in \R, \: q \in \Q, \: \sigma\in \Sigma(\Q, \X), \: l\sigma \rw_{\A}^* q,
\: r\sigma \not\rw_{\A}^*q 
\}$. 
\end{definition}

\noindent
Recall that the completion process builds a sequence
$\aaex^0,\aaex^1,\ldots,\aaex^k$ of automata such that if $s\in\Lang(\aaex^i)$
and $s\rw_{\R} t$ then $t\in\Lang(\aaex^{i+1})$. One step of completion,
{\em i.e.} the process computing $\aaex^{i+1}$ from $\aaex^i$, is defined as follows.
Again, the following definition is a simplification of the definition of~\cite{GenetR-JSC10}.

\begin{definition}[One step of completion]
\label{def:completion-one-step}
  Let $\A= \langle \F, \Q, \Q_f, \Delta \rangle$ be a tree automaton, 
  $\R$ be a left-linear TRS. 
  The one step completed automaton is $\comp_{\R}(\A)= \langle \F, \Q,$ $\Q_f,
  Join^{CP(\R,\A)}(\Delta) \rangle$ where $Join^{S}(\Delta)$ is inductively defined by:
\begin{itemize}
\item $Join^{\emptyset}(\Delta)= \Delta$
\item $Join^{\{(l\rw r, q, \sigma)\} \cup S}(\Delta)=  Join^S(\Delta \cup
  \Delta')$ where 


$\Delta'= \{q' \rw q\}$ if there exists $q'\in \Q$ s.t. $r\sigma \rwDnestar q'$, and
otherwise

$\Delta'= \Norm_{\Delta}(r\sigma \rw q') \cup \{q' \rw q\}$ where $q'$ is a new state for
  $\Delta$

\end{itemize}
\end{definition}

\begin{example}
Let $\A$ be a tree automaton with $\Delta=\{f(q_1) \rw q_0, a \rw q_1, g(q_1)
\rw q_2\}$. If $\R=\{f(x) \rw f(g(x))\}$ then $CP(\R,\A)= \{(f(x) \rw f(g(x)),
  \sigma_3, q_0)\}$ with $\sigma_3= \{x \mapsto q_1\}$, because $f(x) \sigma_3 \rwA^*
  q_0$ and $f(x) \sigma_3 \rwR f(g(x))\sigma_3$. We have $f(g(x))\sigma_3=
  f(g(q_1))$ and there exists no
  state $q$ such that $f(g(q_1)) \rwAnestar q$. Hence,
  $Join^{\{(f(x) \rw f(g(x)),\sigma_3, q_0)\}}(\Delta)= Join^{\emptyset}(\Delta
  \cup \Norm_{\Delta}(f(g(q_1)) \rw q_3) \cup \{q_3 \rw q_0\})$. Since
  $\Norm_{\Delta}(f(g(q_1)) \rw q_3)= \{f(q_2) \rw q_3, q(q_1) \rw q_2\}$, we
  get that $\comp_{\R}(\A)= \langle \F, \Q \cup \{q_3\}, \Q_f,\Delta \cup
  \{f(q_2) \rw q_3, q_3 \rw q_0\}\rangle$.
\end{example}

\subsection{Simplification of  Tree Automata by Equations}
\label{sec:simplif}
In this section, we define the {\em simplification} of tree automata $\A$
w.r.t. a set of equations $E$. This operation permits to
over-approximate languages that cannot be recognized {\em exactly} using tree
automata completion, {\em e.g.} non regular languages. The simplification operation
consists in finding $E$-equivalent terms recognized in $\A$ by different states
and then by merging those states together. The merging of states is performed
using renaming of a state in a tree automaton.

\label{sec:merging}
\begin{definition}[Renaming of a state in a tree automaton]
  Let $\Q,\Q'$ be set of states, $\A= \langle \F, \Q, \Q_f, \Delta \rangle$ be a
  tree automaton, and $\alpha$ a function $\alpha : \Q \mapsto \Q'$. We denote by
  $\A\alpha$ the tree automaton where every occurrence of $q$ is
  replaced by $\alpha(q)$ in $\Q$, $\Q_f$ and in every left and right-hand
  side of every transition of $\Delta$. 
\end{definition}
If there exists a bijection $\alpha$ such that $\A=\A'\alpha$ then $\A$ and
$\A'$ are said to be {\em equivalent modulo renaming}. 
Now we define the {\em simplification relation} which merges states in a tree
automaton according to an equation. Note that it is not required for equations
of $E$ to be linear.

\begin{definition}[Simplification relation]
\label{def:simprel}
  Let $\A= \langle \F, \Q, \Q_f, \Delta \rangle$ be a tree automaton and $E$ be
  a set of equations. For $s=t \in E$, $\sigma \in \Sigma(\Q, \X)$, $q_a,
  q_b \in \Q$ such that $s\sigma\rwAnestar q_a$, $t\sigma\rwAnestar q_b$, 
 and $q_a \neq q_b$ then $\A$ can be {\em simplified} into $\A'= \A\{q_b \mapsto
  q_a\}$, denoted by
$\A \simp_E \A'$.
\end{definition}

\begin{example}
  Let $E=\{s(s(x))=s(x)\}$ and $\A$ be the tree automaton with set of
  transitions $\Delta= \{ a \rw q_0, s(q_0) \rw q_1, s(q_1) 
  \rw q_2\}$. We can perform a simplification step using the equation
  $s(s(x))=s(x)$ because we found a substitution $\sigma=\{x
  \mapsto q_0\}$ such that: $s(s(x))\sigma \rwAnestar q_2$ and $s(x)\sigma\rwAnestar q_1$
  Hence, $\A \simp_E \A'= \A \{q_2 \mapsto q_1\}$\footnote{or $\{q_1 \mapsto q_2\}$,
any of $q_1$ or $q_2$ can be used for renaming.}
\end{example}

\noindent
As stated in~\cite{GenetR-JSC10}, simplification
$\simp_E$ is a terminating relation (each step suppresses a state) and it
enlarges the language recognized by a tree automaton, {\em i.e.} if $\A \simp_E \A'$
then $\Lang(\A) \subseteq \Lang(\A')$. Furthermore, no matter how
simplification steps are performed, the obtained automata are equivalent modulo
state renaming. In the following, $\A \simp_E^! \A'$ denotes that $\A \simp_E^*
\A'$ and $\A'$ is irreducible by $\simp_E$. We denote by $\simpleq E\A$ any automaton $\A'$ such that $\A
\simp_E^! \A'$. 


\begin{theorem}[Simplified Tree Automata~\cite{GenetR-JSC10}]
  \label{theorem:canonical}
  Let $\A,\A_1', \A'_2$ be tree automata and $E$ be a set of 
  equations. If $\A \simp_E^! \A_1'$ and $\A \simp_E^! \A_2'$ then
  $\A'_1$ and $\A'_2$ are equivalent modulo state renaming.
\end{theorem}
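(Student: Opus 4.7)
The plan is to reduce the theorem to confluence of the simplification relation $\simp_E$ modulo state renaming. Since $\simp_E$ is already known to terminate (each step strictly decreases the number of states), a version of Newman's lemma tailored to rewriting modulo an equivalence will deliver the desired uniqueness of normal forms, provided one can establish local confluence modulo renaming.

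For local confluence, I suppose $\A \simp_E \A_1$ and $\A \simp_E \A_2$, where the first step uses an equation $s_1 = t_1 \in E$, a substitution $\sigma_1 \in \Sigma(\Q, \X)$, and merges $q_b^1$ into $q_a^1$; similarly for the second step with $s_2 = t_2$, $\sigma_2$, $q_b^2$, $q_a^2$. Writing $\alpha_i = \{q_b^i \mapsto q_a^i\}$, the key auxiliary fact I would first prove is that renaming commutes with $\epsifree$-reachability: if $u \rwAnestar q$ in $\A$, then the corresponding derivation exists in $\A\alpha_i$ with every state $p$ in $u$ replaced by $\alpha_i(p)$ and $q$ replaced by $\alpha_i(q)$. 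This is straightforward by induction on the derivation, using the definition of $\A\alpha_i$. As a consequence, in $\A_1 = \A\alpha_1$ the substitution $\alpha_1 \circ \sigma_2$ together with $s_2 = t_2$ still witnesses a simplification step, merging $\alpha_1(q_b^2)$ into $\alpha_1(q_a^2)$ (or the step is trivially joined if these two states already coincide). The symmetric argument applies starting from $\A_2$. Performing both merges in either order yields automata that agree up to a bijective relabeling of states, which is local confluence modulo renaming.

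The main subtlety lies in the case where the two merging operations overlap, for instance when $q_a^1 = q_b^2$ or $q_b^1 = q_b^2$: the compositions $\alpha_2 \circ \alpha_1$ and $\alpha_1 \circ \alpha_2$ need not coincide as functions, but they necessarily induce the same partition of $\Q$, and hence the same quotient automaton up to renaming of equivalence-class representatives. I would settle this by a short case analysis on the four (possibly non-distinct) states $q_a^1, q_b^1, q_a^2, q_b^2$, building the explicit bijection between the resulting state sets.

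Once local confluence modulo renaming is in place, termination combined with a Newman-style argument yields confluence modulo renaming. The theorem then follows immediately: both $\A_1'$ and $\A_2'$ are $\simp_E$-normal forms of $\A$ and hence joinable modulo renaming, but being irreducible by $\simp_E$ they must already be equivalent modulo state renaming.
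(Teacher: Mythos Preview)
The paper does not actually contain a proof of this theorem: it is stated as a result imported from~\cite{GenetR-JSC10}, with no argument given here. There is thus nothing in the present paper to compare your proposal against.

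Your sketch is nonetheless the natural route and is essentially correct. Termination of $\simp_E$ is immediate (each step strictly decreases the number of states), and your local-confluence argument is sound: renaming commutes with $\epsifree$-rewriting, so a simplification redex in $\A$ survives in $\A\alpha_i$ unless the two states to be merged have already been identified by $\alpha_i$; in the overlapping cases you correctly observe that both orders of merging induce the same partition of $\Q$, hence isomorphic quotient automata. One point you leave implicit but which is needed for the Newman-style argument modulo an equivalence to go through: the relation $\simp_E$ must itself be compatible with state renaming, i.e.\ if $\B$ is a renaming of $\A$ and $\A \simp_E \A'$, then $\B \simp_E \B'$ for some $\B'$ that is a renaming of $\A'$. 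This is again an immediate consequence of the morphism property of renaming, but it should be stated explicitly to make the lifting from local to global confluence airtight.
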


\subsection{The full Completion Algorithm}
\label{sec:completion}

\begin{definition}[Automaton completion]
\label{def:completion}
Let $\A$ be a tree automaton, $\R$ a left-linear TRS and $E$ a set of equations. 
\begin{itemize}
\item $\aapprox^0= \A$
\item $\aapprox^{n+1}= \simpleq E {\comp_{\R}(\aapprox^n)}$, for $n \geq 0$
\end{itemize}
If there exists $k \in \NN$ such that $\aapprox^{k}=\aapprox^{k+1}$, then we
denote $\aapprox^k$ by $\aapprox^*$.
\end{definition}
In practice, checking if $CP(\R,\aapprox^k)=\emptyset$ is sufficient to know
that $\aapprox^k$ is a fixpoint. However, a fixpoint cannot always be finitely
reached\footnote{See~\cite{Genet-Habil}, for classes of $\R$ for which a
  fixpoint always exists.}. To ensure termination, one can provide a set of
approximating equations to overcome infinite rewriting and 
completion divergence. 

\begin{example}
  Let $\R=\{f(x, y) \rw f(s(x), s(y))\}$, $E=\{ s(s(x))=s(x) \}$ and $\A^0$ be
  the tree automaton with set of transitions $\Delta=\{f(q_a, q_b) \rw q_0), a
  \rw q_a, b \rw q_b\}$, {\em i.e.} $\Lang(\A^0)=\{f(a,b)\}$. The completion ends
  after two completion steps on $\aapprox^2$ which is a fixpoint. Completion
  steps are summed up in the following table. To simplify the presentation, we do
  not repeat the common transitions: $\aapprox^i$ and $\comp_\R(\A^{i})$ columns
  are supposed to contain all transitions of $\A^0,\ldots, \aapprox^{i-1}$.
The automaton $\aapprox^1$ is exactly $\comp_{\R}(\A^0)$ since
simplification by equations do not apply. Simplification has been applied on
$\comp_\R(\aapprox^1)$ to obtain $\aapprox^2$.

\noindent
{\small
\begin{center}
\begin{tabular}{||r||r|r||r|r||}
\hline 
$\A^0$ & $\comp_\R(\A^0)$ & $\aapprox^1$ & $\comp_\R(\aapprox^1)$ & $\aapprox^2$\\
\hline
$f(q_a, q_b) \rw q_0$ & $f(q_1, q_2) \rw q_3$ & $f(q_1, q_2) \rw q_3$ & $f(q_4,q_5) \rw q_6$ & $f(q_1, q_2) \rw q_6$\\
$a \rw q_a$ & $s(q_a) \rw q_1$ & $s(q_a) \rw q_1$ & $s(q_1)\rw q_4$ & $s(q_1) \rw q_1$ \\ 
$b \rw q_b$ & $s(q_b) \rw q_2$ & $s(q_b) \rw q_2$ & $s(q_2) \rw q_5$ & $s(q_2) \rw q_2$ \\
            & $q_3 \rw q_0 $ & $q_3 \rw q_0 $ & $q_6 \rw q_3 $ & \\
\hline
\end{tabular}
\end{center}}
\end{example}

\noindent
Now, we recall the lower and upper bound theorems.  Tree automata completion of
automaton $\A$ with TRS $\R$ and set of equations $E$ is lower bounded by
$\R^*(\Lang(\A))$ and upper bounded by $\R^*_{E}(\Lang(\A))$. The lower bound theorem 
ensures that the completed automaton $\aapprox^*$ recognizes all $\R$-reachable
terms (but not all $\R/E$-reachable terms). The upper bound theorem guarantees
that all terms recognized by $\aapprox^*$ are only $\R/E$-reachable terms. 

\begin{theorem}[Lower bound~\cite{GenetR-JSC10}]
  \label{completeness}
  Let $\R$ be a left-linear TRS, $\A$ be a tree automaton and $\nr$ be a set of
  equations. If completion terminates on $\aapprox^*$ then
$\Lang(\aapprox^*)\supseteq \desc(\Lang(\A))$.
\end{theorem}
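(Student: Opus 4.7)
The plan is to prove by induction on the length $n$ of a rewrite derivation $s \rw^n_\R t$ starting from $s \in \Lang(\A)$ that $t \in \Lang(\aapprox^*)$; this yields the theorem because $\desc(\Lang(\A))$ is exactly the set of such $t$. A preliminary ingredient is the monotonicity lemma $\Lang(\aapprox^i) \subseteq \Lang(\aapprox^{i+1})$ for all $i \geq 0$: one step of completion $\comp_\R$ only adds transitions (and possibly new states), hence $\Lang(\aapprox^i) \subseteq \Lang(\comp_\R(\aapprox^i))$, and the simplification relation $\simp_E$ was noted earlier to enlarge the recognized language. Composing the two gives monotonicity, so $\Lang(\A) = \Lang(\aapprox^0) \subseteq \Lang(\aapprox^*)$, which settles the base case $n = 0$.

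For the inductive step, suppose $s \rw^n_\R u \rw_\R t$. By induction, $u \in \Lang(\aapprox^*)$, so there is a run $u \rw^*_{\aapprox^*} q_f$ with $q_f \in \Q_f$. Let $l \rw r \in \R$, $p \in \pos(u)$ and a term substitution $\sigma'$ witness $u \rw_\R t$, with $u|_p = l\sigma'$ and $t = u[r\sigma']_p$. The run of $u$ factors as $u[l\sigma']_p \rw^*_{\aapprox^*} u[q]_p \rw^*_{\aapprox^*} q_f$ for some state $q$. For each $x \in \var(l)$, reading off the state reached at the (unique, by left-linearity) position of $x$ in $l\sigma'$, I would build a state substitution $\sigma \in \Sigma(\Q, \X)$ with $\sigma'(x) \rw^*_{\aapprox^*} \sigma(x)$, and then $l\sigma \rw^*_{\aapprox^*} q$ by collapsing the run along the variable positions. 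Left-linearity is essential here: if $l$ had two occurrences of the same variable, the run at those two positions might reach different states, and no single $\sigma(x)$ would serve.

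Given $\sigma$, either $r\sigma \rw^*_{\aapprox^*} q$ already holds, or $(l \rw r, \sigma, q) \in CP(\R, \aapprox^*)$. In the latter case, because $\aapprox^*$ is a fixpoint, one step of completion followed by simplification yields an automaton equivalent to $\aapprox^*$ modulo state renaming, so the transitions added by the join operation (a normalized run of $r\sigma$ and the $\varepsilon$-transition to $q$) survive up to renaming, yielding $r\sigma \rw^*_{\aapprox^*} q$ in $\aapprox^*$. Composing with $\sigma'(x) \rw^*_{\aapprox^*} \sigma(x)$ at each variable position gives $r\sigma' \rw^*_{\aapprox^*} q$, hence $t = u[r\sigma']_p \rw^*_{\aapprox^*} u[q]_p \rw^*_{\aapprox^*} q_f$, and $t \in \Lang(\aapprox^*)$.

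The main obstacle is the interplay between critical-pair joining and simplification. After $\comp_\R$ adds the transitions witnessing $r\sigma \rw^*_\bullet q$, the subsequent $\simp_E^!$ step may merge $q$ and the freshly introduced $q'$ with other states; one must check that these merges preserve the newly added run rather than break it. This follows from the general fact that state renaming enlarges the language, but spelling out the coherence across all merged states at the fixpoint, and combining it with the extraction of $\sigma$ from a nondeterministic run, is where the bookkeeping becomes delicate.
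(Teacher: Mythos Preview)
The paper does not supply its own proof of this theorem: it is recalled verbatim from~\cite{GenetR-JSC10} and immediately followed by the statement of the upper-bound theorem, so there is nothing in the present paper to compare your argument against. Your reconstruction is the standard completeness argument for tree automata completion and is essentially correct.

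One remark on presentation. At the fixpoint you argue syntactically, saying that the transitions added by $Join$ ``survive up to renaming'' because $\aapprox^{*+1}$ coincides with $\aapprox^*$ modulo renaming. This works, but tracking the renaming $\alpha$ through $q$, $q'$, $q_f$ and the context $u[\cdot]_p$ is heavier than necessary. A cleaner route stays at the language level: once you have shown that $t = u[r\sigma']_p \rw^*_{\comp_\R(\aapprox^*)} q_f$ (which follows from your extraction of $\sigma$, the definition of $Join$, and $\var(r)\subseteq\var(l)$), you conclude directly
\[
t \in \Lang(\comp_\R(\aapprox^*)) \subseteq \Lang\bigl(\simpleq E {\comp_\R(\aapprox^*)}\bigr) = \Lang(\aapprox^*),
\]
using only that simplification enlarges the recognized language and the fixpoint equality $\aapprox^{*}=\aapprox^{*+1}$. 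This removes the need to reason about how individual states are merged.
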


\noindent
The upper bound theorem states the precision result of
completion. It is defined using the $\R/E$-coherence property.
The intuition behind $\R/E$-coherence is the following: in the tree automaton
$\varepsilon$-transitions represent rewriting steps and normalized transitions
recognize $E$-equivalence classes. More precisely, in a $\R/E$-coherent tree
automaton, if two terms $s,t$ are recognized into the same state $q$  using
only normalized transitions then they belong to 
the same $E$-equivalence class. Otherwise, if at least one
$\varepsilon$-transition is necessary to recognize, say, $t$ into $q$ then at
least one step of rewriting was necessary to obtain $t$ from $s$.

\begin{theorem}[Upper bound~\cite{GenetR-JSC10}]
  \label{correctness}
  Let $\R$ be a left-linear TRS, $E$ a set of equations and $\A$ a
  $\R/E$-coherent tree automaton. For any $i\in\NN$:
%
%
$\Lang(\aapprox^i)\subseteq \desc_E(\Lang(\A))$ 
and $\aapprox^i$ is $\R/E$-coherent. 
\end{theorem}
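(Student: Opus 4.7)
The plan is to prove both assertions simultaneously by induction on $i$. Both parts are intertwined because the language inclusion $\Lang(\aapprox^i) \subseteq \edesc(\Lang(\A))$ will in fact be extracted as a direct consequence of the $\R/E$-coherence invariant: if $t \rw^*_{\aapprox^i} q$ and $q$ is reached from some term $s\in\Lang(\A)$ through a mix of normalized transitions (encoding $E$-equivalence classes) and $\varepsilon$-transitions (encoding rewriting steps), then $s \rwRE^* t$.

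For the base case $i=0$, $\aapprox^0=\A$ is $\R/E$-coherent by hypothesis, and since $\rwRE^*$ is reflexive on $=_E$-classes, $\Lang(\A) \subseteq \edesc(\Lang(\A))$ trivially. For the inductive step, I would decompose $\aapprox^{i+1} = \simpleq E {\comp_{\R}(\aapprox^i)}$ into the completion step $\A'' = \comp_{\R}(\aapprox^i)$ followed by the simplification step, and prove that each preserves $\R/E$-coherence and keeps the recognized language inside $\edesc(\Lang(\A))$.

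For the completion step, I would examine a single critical pair $(l\rw r, \sigma, q)$: Definition~\ref{def:completion-one-step} either adds an $\varepsilon$-transition $q'\rw q$ when $r\sigma \rwDnestar q'$ already, or adds $\Norm_\Delta(r\sigma \rw q')$ together with $q' \rw q$. In both cases, I need to argue (i) that the fresh normalized transitions from the normalization only introduce new states whose languages are singletons of subterms of $r\sigma$, hence are $\R/E$-coherent by construction, and (ii) that the added $\varepsilon$-transition $q' \rw q$ respects the invariant: any term $t$ newly recognized in $q$ via this $\varepsilon$-transition reads as $r\sigma$ (up to recognition by $q'$), and since $l\sigma \rwAnestar q$ in $\aapprox^i$, by the induction hypothesis there exists $s\in\Lang(\A)$ with $s \rwRE^* l\sigma$; applying the rule $l \rw r$ gives $s \rwRE^* r\sigma$, and via the newly added transitions, $r\sigma$ reaches $q$ through at least one $\varepsilon$-transition, as required by coherence.

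For the simplification step, a single application of $\simp_E$ merges $q_a$ and $q_b$ whenever $s\sigma \rwAnestar q_a$ and $t\sigma \rwAnestar q_b$ for some $s=t \in E$. The key observation is that any two terms sent to the merged state were already, prior to the merge, $E$-equivalent or connected by a rewriting chain in $\edesc(\Lang(\A))$, so $\R/E$-coherence is preserved, and the additional terms now accepted at the merged state all lie in $\edesc(\Lang(\A))$ because $=_E$ is absorbed into $\rwRE^*$. Since $\simp_E$ is terminating (Theorem~\ref{theorem:canonical} and the remarks before it), iterating this yields $\simpleq E {\A''}$ with the invariant preserved. The main obstacle I expect is the bookkeeping around coherence during completion: one must carefully separate the role of $\varepsilon$-transitions (one per rewriting step) from that of normalized transitions (one $E$-class) when new subterms of $r\sigma$ are normalized, so that the proof by induction on the length of derivations $t \rwAnestar q$ in $\aapprox^{i+1}$ goes through without conflating the two.
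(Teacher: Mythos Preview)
The paper does not give a proof of this theorem at all: it is quoted as a result from~\cite{GenetR-JSC10} and simply stated. So there is nothing in the present paper to compare your proposal against.

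That said, your outline is along the expected lines (induction on $i$, showing that both one completion step and simplification preserve $\R/E$-coherence, and reading off the language inclusion from coherence). Two small points you should tighten if you actually carry this out. First, in the completion step you write ``since $l\sigma \rwAnestar q$ in $\aapprox^i$'', but the definition of a critical pair only gives you $l\sigma \rw^*_{\aapprox^i} q$, possibly using $\varepsilon$-transitions; you need to invoke coherence of $\aapprox^i$ to recover a representative reached $\epsifree$-only and relate it to $\Lang(\A)$. Second, after adding $q'\rw q$, the terms newly recognized in $q$ are not just $r\sigma$ but every term in $\Lang(\aapprox^{i+1},q')$; you must argue coherence for all of them, which again comes from the coherence of the states populated by $\Norm_\Delta$ and of the pre-existing $q'$.
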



\section{Termination criterion for a given set of equations}
\label{sec:termCrit}
\renewcommand{\desc}{\R^*}

Given a set of
equations $E$, the effect of the simplification with $E$ on a tree automaton is
to merge two distinct states recognizing instances of the left and right-hand
side for all the equations of $E$. In this section, we give a sufficient
condition on $E$ and on the completed tree automata $\aapprox^i$ for the tree
automata completion to always terminate. The intuition behind this condition is
simple: if the set of equivalence classes for $E$, {\em i.e.} $\TF/_{=_E}$,  is finite then so should be
the set of new states used in completion. However, this is not true in general
because simplification of an automaton with $E$ does not necessarily merge all
$E$-equivalent terms.

\begin{example}
\label{ex:compNotEcompat}
  Let $\A$ be the tree automaton with set of transitions $a \rw q$, $\R=\{a \rw
  c\}$ and let $E=\{a=b,b=c\}$. The set of transitions of $\comp_\R(\A)$ is $\{a
  \rw q, c \rw q', q' \rw q\}$. We have $a =_E c$, $a \in
  \Lange(\comp_\R(\A),q)$ and $c \in \Lange(\comp_\R(\A),q')$ but on the
  automaton $\comp_\R(\A)$, no simplification situation (as described by
  Definition~\ref{def:simprel}), can be found because the term $b$ is not
  recognized by $\comp_\R(\A)$. Hence, the simplified automaton is
  $\comp_\R(\A)$ where $a$ and $c$ are recognized by different states.
\end{example}

\noindent
There is no simple solution to have a simplification algorithm
merging all states recognizing $E$-equivalent terms (see
Section~\ref{sec:further}). 
Having a complete automaton $\A$ solve the above problem but leads to rough
approximations (see~\cite{Genet-rep13}).
In the next section, we propose to give some simple restrictions on $E$ to
ensure that completion terminates. In Section~\ref{sec:functional}, we will see 
how those restrictions can easily be met for ``functional'' TRS, {\em i.e.} a
typed first-order functional program translated into a TRS. 

\subsection{General criterion}
What Example~\ref{ex:compNotEcompat} shows is that, for a simplification with
$E$ to apply, it is necessary that both sides of the equation are recognized by
the tree automaton. In the following, we will define a set $\Ec$ of {\em
  contracting} equations so that this property is true. What
Example~\ref{ex:compNotEcompat} does not show is that, by default, tree automata
are not $E$-compatible. In particular, any non $\epsifree$-deterministic automaton does not
satisfy the reflexivity of $=_E$. For instance, if an automaton $\A$ has two
transitions $a \rw q_1$ and $a \rw q_2$, since $a=_E a$ for all $E$, for $\A$ to
be $E$-compatible we should have $q_1 = q_2$.  To enforce
$\epsifree$-determinism by automata simplification, we define a set of {\em reflexivity equations} as follows.

\begin{definition}[Set of reflexivity equations $\ETF$]
For a given set of symbols $\F$, $\ETF=\{f(x_1, \ldots, x_n)=f(x_1, \ldots, x_n) \sep
f \in \F, \mbox{ and arity of $f$ is $n$}\}$, where $x_1\ldots x_n$ are pairwise
distinct variables.
\end{definition}

\noindent
Note that for all set of equations $E$, the relation $=_E$ is trivially equivalent to
$=_{E\cup \ETF}$. Furthermore, simplification with $\ETF$ transforms 
all automaton into an $\epsifree$-deterministic automaton, as stated in the
following lemma.

\begin{lemma}
\label{lem:etf_det}
For all tree automaton $\A$ and all set of equation $E$, if $E\supseteq \ETF$
and $\A \simp_E^! \A'$ then $\A'$ is $\epsifree$-deterministic. 
\end{lemma}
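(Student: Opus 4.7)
The plan is a short proof by contradiction that directly exploits the design of $\ETF$. Suppose, toward a contradiction, that $\A'$ is not $\epsifree$-deterministic. By the definition of $\epsifree$-determinism, this means there exist a symbol $f \in \F$ of some arity $n$, states $q_1, \ldots, q_n \in \Q$, and two distinct states $q_a \neq q_b$ such that both $f(q_1, \ldots, q_n) \rw q_a$ and $f(q_1, \ldots, q_n) \rw q_b$ are normalized transitions of $\A'$.

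Next, I would instantiate the relevant reflexivity equation. Since $E \supseteq \ETF$, the equation $f(x_1, \ldots, x_n) = f(x_1, \ldots, x_n)$ belongs to $E$, and I choose the state substitution $\sigma \in \Sigma(\Q, \X)$ defined by $\sigma(x_i) = q_i$ for $i = 1, \ldots, n$. Both sides of this equation instantiate under $\sigma$ to the same configuration $f(q_1, \ldots, q_n)$, which rewrites in a single normalized (non-$\varepsilon$) step to $q_a$ on one side and to $q_b$ on the other. Consequently $f(x_1, \ldots, x_n)\sigma \rwApnestar q_a$ and $f(x_1, \ldots, x_n)\sigma \rwApnestar q_b$, with $q_a \neq q_b$.

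This is exactly the premise of Definition~\ref{def:simprel} applied to $\A'$, so $\A' \simp_E \A'\{q_b \mapsto q_a\}$. This contradicts $\A \simp_E^! \A'$, which requires $\A'$ to be irreducible by $\simp_E$. The case of a constant symbol $f$ (arity $0$) is handled identically, with an empty substitution and the two transitions $f \rw q_a$ and $f \rw q_b$ directly witnessing the simplification step.

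There is no real technical obstacle here: the equations in $\ETF$ are tailored precisely to force a merging situation whenever two normalized transitions share a left-hand side, and the lemma falls out by unfolding definitions. The only minor point to be careful about is to apply the simplification relation with respect to $\A'$ (hence $\rwApnestar$) rather than the original $\A$, and to observe that the state substitution required by Definition~\ref{def:simprel} is precisely the one reading off the state arguments of the clashing transitions.
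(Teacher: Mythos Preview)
Your proof is correct. You work directly at the level of transitions: a violation of $\epsifree$-determinism gives two normalized transitions with the same left-hand side $f(q_1,\ldots,q_n)$, and the reflexivity equation $f(x_1,\ldots,x_n)=f(x_1,\ldots,x_n)$ instantiated by $\sigma(x_i)=q_i$ immediately triggers a simplification step on $\A'$, contradicting irreducibility. All the hypotheses of Definition~\ref{def:simprel} are met, and you are right to note that the single-step rewrites $f(q_1,\ldots,q_n)\rw q_a$ and $f(q_1,\ldots,q_n)\rw q_b$ are non-$\varepsilon$ and hence fall under $\rwApnestar$.

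The paper's own proof (only sketched here, with details deferred to a report) proceeds instead by induction on the height of terms. That route presumably establishes the term-level form of the property---that no ground term is $\epsifree$-recognized by two distinct states of $\A'$---and derives transition-level determinism from it. Your argument is more elementary and better matched to the syntactic definition of $\epsifree$-determinism used in the paper (``no two normalized transitions with the same left-hand side''); the inductive argument would be the natural choice if one wanted the language-level statement $\Lange(\A',q_1)\cap\Lange(\A',q_2)=\emptyset$ directly. Either way, the key idea is the same: the equations in $\ETF$ are exactly what is needed to force a merge whenever determinism fails.
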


\begin{proof} Shown by induction on the height of terms
 (see~\cite{Genet-rep13} for details). $\qed$
\end{proof}
\noindent
We now define sets of contracting equations. Such sets are defined for
a set of symbols $\myF$ which can be a subset of $\F$. This will be used later
to restrict contracting equations to the subset of constructor symbols of $\F$.

\begin{definition}[Sets of contracting equations for $\myF$, $\Ecpy$]
Let $\myF \subseteq \F$. A set of equations is contracting for $\myF$,
denoted by $\Ecpy$, if all equations of $\Ecpy$ are of the form $u= u|_p$ with $u\in \myTFX$ a linear term, $p \neq \lambda$, and 
if the set of normal forms of $\myTF$ w.r.t. the TRS $\ERcpy=\{ u \rw u|_p \: \sep
\: u= u|_p \in \Ecpy\}$ is finite.

\end{definition}

\noindent
Contracting equations, if defined on $\F$,  define an upper bound on the number
of states of a simplified automaton.

\begin{lemma}
\label{lem:cardComp}
Let $\A$ be a tree automaton and $\Ecf$ a set of contracting equations for
$\F$. If $E\supseteq \Ecf \cup \ETF$ then the simplified automaton $\simpleq E \A$
is an $\epsifree$-deterministic automaton having no more states than terms in
$\lirr(\ERcf)$.  
\end{lemma}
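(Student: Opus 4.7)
The proof will split along the two conjuncts of the lemma. For $\epsifree$-determinism of $\A' := \simpleq E \A$, the plan is simply to invoke Lemma~\ref{lem:etf_det}, which applies because $E \supseteq \ETF$.

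For the cardinality bound, the plan is to build an injection from the $\epsifree$-reachable states of $\A'$ into $\lirr(\ERcf)$, which is finite by the definition of a set of contracting equations for $\F$. The map sends each $\epsifree$-reachable $q$ to an $\ERcf$-normal form $n_q$ of some term in $\Lange(\A',q)$; the injectivity will then be immediate from the first part, because an $\epsifree$-deterministic automaton has pairwise disjoint $\epsifree$-languages, so distinct states yield distinct $n_q$.

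The central technical step, and the one I would spend most of the proof on, is to show that $\Lange(\A',q)$ is closed under one $\ERcf$-step, so that the iterated normal form $n_q$ still lies in $\Lange(\A',q)$. Suppose $t \rwnestar q$ and $t \rw_{\ERcf} t'$ at a position $p'$ using an equation $u = u|_p \in \Ecf$ with ground substitution $\tau$, so that $t|_{p'} = u\tau$ and $t' = t[u|_p\tau]_{p'}$. The derivation $t \rwnestar q$ must recognize $u\tau$ at some state $q_1$. Because $u$ is linear (by the definition of $\Ecf$), this sub-derivation factors through a state substitution $\sigma \in \Sigma(\Q,\X)$ obtained by recognizing each $\tau(x)$ in some $\sigma(x) = q_x$, and then driving $u\sigma \rwnestar q_1$ using only normalized transitions on the non-variable skeleton of $u$. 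That same derivation contains a sub-derivation $u|_p \sigma \rwnestar q_2$ for some $q_2$. Since $\A'$ is in $\simp_E$-normal form and $u = u|_p \in E$, Definition~\ref{def:simprel} forbids $q_1 \neq q_2$; hence $q_1 = q_2$ and $u|_p\tau \rwnestar q_1$ as well. Plugging this back into the recognition of $t$ yields $t' \rwnestar q$, as required. Iterating on $t$ until no $\ERcf$-redex remains produces $n_q \in \Lange(\A',q) \cap \lirr(\ERcf)$, completing the construction of the injection.

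The main obstacle is the bridging argument between the ground substitution $\tau$ of the $\ERcf$-step and the state substitution $\sigma$ needed to trigger a $\simp_E$-step: it is exactly here that linearity of $u$ is essential, to guarantee that a single uniform state $q_x$ can be assigned to every occurrence of each variable $x$ of $u$. Everything else is either a direct appeal to earlier results (Lemma~\ref{lem:etf_det}, Definition~\ref{def:simprel}) or a routine induction on the height of the $\ERcf$-rewrite chain.
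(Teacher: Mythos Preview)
Your proposal is correct and follows essentially the same route as the paper: show that each $\epsifree$-reachable state's $\epsifree$-language is closed under $\ERcf$-rewriting (hence contains a normal form), then use $\epsifree$-determinism from Lemma~\ref{lem:etf_det} to conclude that distinct states yield distinct normal forms. Your injection framing and your explicit treatment of the linearity bridging step are cleaner than the paper's somewhat informal ``we thus have a situation of application of the equation'' and its superfluous contradiction wrapper, but the underlying argument is the same.
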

\begin{proof}
  First, assume for all state $q$ of $\simpleq E \A$, $\Lange(\simpleq E \A,q)
  \cap \lirr(\ERcf) =\emptyset$. Then, for all terms $s$ such that $s \rwne q$, we know
  that $s$ is not in normal form w.r.t. $\ERcf$. As a result, the left-hand side
  of an equation of $\Ecf$ can be applied to $s$. This means that there exists an
  equation $u = u|_p$, a ground context $C$ and a substitution $\theta$ such
  that $s=C[u\theta]$. Furthermore, since $s \rwne q$, we know
  that $C[u\theta]\rwne q$ and that there exists a state $q'$
  such that $C[q'] \rwne q$ and $u\theta \rwne q'$. From $u\theta\rwne q'$, we know that all subterms of
  $u\theta$ are recognized by at least one state in $\simpleq E \A$. Thus, there
  exists a state $q''$ such that $u|_p\theta \rwne q''$. We thus
  have a situation of application of the equation $u = u|_p$ in the
  automaton. Since $\simpleq E \A$ is simplified, we thus know that $q'=q''$. As
  mentioned above, we know that $C[q'] \rwne q$. Hence
  $C[u|_p\theta] \rwne C[q'] \rwne q$. If
  $C[u|_p\theta]$ is not in normal form w.r.t. $\ERcf$ then we can do the same reasoning
  on $C[u|_p\theta] \rwne q$ until getting a term that is in
  normal form w.r.t. $\ERcf$ and recognized by the same state $q$. Thus, this
  contradicts the fact that $\simpleq E \A$ recognizes no term of $\lirr(\ERcf)$.

Then, by definition of $\Ecf$, $\lirr(\ERcf)$ is finite. Let $\{t_1, \ldots,
t_n\}$ be the subset of $\lirr(\ERcf)$ recognized by $\simpleq E
\A$. Let $q_1, \ldots, q_n$ be the states recognizing $t_1, \ldots, t_n$
respectively. We know that there is a finite set of states recognizing $t_1,
\ldots, t_n$ because $E \supseteq
\ETF$ and Lemma~\ref{lem:etf_det} entails that $\simpleq E \A$ is
$\epsifree$-deterministic. Now, for all terms $s$ recognized by a state $q$ in $\simpleq E
\A$, {\em i.e.} $s \rwne q$,  we can use a reasoning similar to the
one carried out above and show that $q$ is equal to one state of $\{q_1, \ldots,
q_n\}$ recognizing normal forms of $\ERcf$ in $\simpleq E \A$. Finally, there
are at most $card(\lirr(\ERcf))$ states in $\simpleq E \A$. $\qed$
\end{proof}

\noindent
Now it is possible to state the Theorem guaranteeing the termination of
completion if the set of equations $E$ contains a set of contracting equations
$\Ecf$ for $\F$ and a set of reflexivity equations. 

\begin{theorem}
\label{th:termFinite}
Let $\A$ be a tree automaton, $\R$ a left linear TRS and $E$ a set of
equations. If $E\supseteq \ETF \cup \Ecf$, then completion of $\A$ by $\R$ and
$E$ terminates.
\end{theorem}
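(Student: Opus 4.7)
The plan is to combine Lemma \ref{lem:cardComp} with a pigeonhole argument over the space of $\epsifree$-deterministic tree automata of bounded size. Since $E \supseteq \Ecf \cup \ETF$, every simplified automaton produced during completion is forced to have a uniform state bound, which will constrain the whole sequence $(\aapprox^n)$ to live in a finite space modulo state renaming.

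The first step is to set $N = |\lirr(\ERcf)|$, which is finite by the assumption that $\Ecf$ is contracting for $\F$. For every $n \ge 1$ the automaton $\aapprox^n = \simpleq E{\comp_\R(\aapprox^{n-1})}$ is the result of a simplification, so Lemma \ref{lem:cardComp} yields that $\aapprox^n$ is $\epsifree$-deterministic with at most $N$ states. Since $\F$ is finite and of bounded arity, the number of possible normalized transitions on a fixed pool of $N$ states is bounded by $\sum_{f \in \F} N^{\arity(f)+1}$, the $\varepsilon$-transitions by $N^2$, and the final-state sets by $2^N$; only finitely many tree automata (modulo renaming) can therefore appear in the sequence. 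I would then observe that $\Lang(\aapprox^n)$ is monotone non-decreasing, because $\comp_\R$ only adds transitions to join critical pairs while $\simpleq E{\cdot}$ only merges states, both of which preserve or enlarge $\Lang$. A monotone trajectory through a finite set of iso classes must become stationary, which by Theorem \ref{theorem:canonical} -- identifying automata up to renaming -- yields the fixpoint condition of Definition \ref{def:completion}.

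The main obstacle is bridging from ``the sequence of iso classes lives in a finite set with monotone language'' to the actual equality $\aapprox^{k+1} = \aapprox^k$ of Definition \ref{def:completion}. A priori the sequence of iso classes could enter a cycle of length greater than one in which the language is constant. The remedy is to view the one-step completion $\aapprox^n \mapsto \aapprox^{n+1}$ as a function well-defined on iso classes -- the simplification part being canonical by Theorem \ref{theorem:canonical} and the joining of critical pairs being canonical modulo the choice of fresh names -- so that the successor map is deterministic. Monotonicity of $\Lang$ then forces any periodic orbit to collapse to a singleton: two successive elements in a cycle would have identical language yet distinct iso class, whereas the canonical nature of both $\comp_\R$ and $\simpleq E{\cdot}$ makes the orbit return to its starting iso class only when it has already reached a fixpoint. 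Identifying automata up to renaming then gives the termination claimed by the theorem.
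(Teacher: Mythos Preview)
Your core idea matches the paper's proof exactly: invoke Lemma~\ref{lem:cardComp} to bound the number of states of every simplified automaton and conclude that completion cannot diverge. The paper's entire proof is in fact those two sentences, so at that level you are aligned and even more explicit than the paper about why a state bound translates into finitely many automata up to renaming.

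Where you go further is in justifying the passage from ``bounded state set'' to ``the sequence stabilizes,'' and here your last paragraph has a gap. You correctly observe that the iso classes of the $\aapprox^n$ live in a finite set and that $\Lang(\aapprox^n)$ is monotone, so the orbit is eventually periodic with constant language on the periodic part. But your claim that determinism of the successor map forces the period to be~$1$ does not follow: a deterministic self-map on a finite set can have cycles of any length, and nothing you have argued excludes two non-isomorphic $\epsifree$-deterministic automata with the \emph{same} recognized language that map to each other under one completion step. Your closing sentence (``the canonical nature of both $\comp_\R$ and $\simpleq E{\cdot}$ makes the orbit return to its starting iso class only when it has already reached a fixpoint'') is a restatement of the desired conclusion, not an argument for it. To actually close this you would need a finer monotone invariant than $\Lang$---for instance, tracking the $\epsifree$-languages of individual states together with the $\varepsilon$-transitions under a canonical naming, so that the transition set itself becomes an increasing sequence in a finite lattice. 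The paper's one-line proof simply asserts that divergence would require infinitely many new states and leaves this bridging step implicit.
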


\begin{proof}
For completion to diverge it must produce infinitely many new states. This is
impossible if $E$ contains $\Ecf$ and $\ETF$ (see
Lemma~\ref{lem:cardComp}). $\qed$
\end{proof}

\subsection{Criterion for Functional TRSs}
\label{sec:functional}
Now, we consider functional programs viewed as TRSs. We assume that such
TRSs are left-linear, which is a common assumption on TRSs obtained from
functional programs~\cite{BaaderN-book98}. In this section, we will restrict
ourselves to sufficiently complete TRSs obtained from functional programs and
will refer to them as {\em functional TRSs}. For TRSs representing functional programs,  
defining contracting equations of $\Ecc$ on $\C$ rather than
on $\F$ is enough to guarantee termination of completion. This is more convenient and
also closer to what is usually done in static analysis where abstractions are
usually defined on data and not on function applications.
Since the TRSs we consider are sufficiently complete, any term of $\TF$ can be
rewritten into a data-term of $\TC$. As above, using equations of $\Ecc$ we are going to
ensure that the data-terms of the computed languages will be
recognized by a bounded set of states. To lift-up this property to $\TF$ it is
enough to ensure that $\forall s,t \in \TF$ if $s \to_R t$ then $s$
and $t$ are recognized by equivalent states. This is the role of the set of
equations $E_R$.

\begin{definition}[$E_\R$]
Let $\R$ be a TRS, the set of $\R$-equations is $E_\R= \{l=r \sep l \rw r \in \R\}$.
\end{definition}



\begin{theorem}
\label{th:termComplete}
Let $\A_0$ be a tree automaton, $\R$ a sufficiently complete left-linear TRS and $E$ a set of equations.
If $E \supseteq \ETF \cup \Ecc \cup E_\R$ with $\Ecc$ contracting then completion of $\A_0$ by $\R$ and $E$
terminates.
\end{theorem}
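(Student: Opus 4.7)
The plan is to reduce Theorem~\ref{th:termComplete} to the same schema used for Theorem~\ref{th:termFinite}: the combination $\Ecc \cup E_\R$, together with sufficient completeness of $\R$, should bound the number of reachable states of every $\aapprox^i$ by $|\lirr(\ERcc)|$. Intuitively, $\Ecc$ alone contracts only data-terms, but sufficient completeness plus $E_\R$ lets us push every term of $\TF$ into a data-term $=_E$-class.

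First I would invoke Lemma~\ref{lem:etf_det} to obtain $\epsifree$-determinism of every $\aapprox^i$, using $E\supseteq \ETF$. Next I would establish a semantic bound: $\TF/_{=_E}$ has at most $|\lirr(\ERcc)|$ classes. By sufficient completeness, every $s\in\TF$ rewrites to some $d\in\TC$; since $E_\R\subseteq E$ and each rewrite step is an instance of an equation of $E_\R$, we have $s=_E d$. Then, since $\Ecc$ is contracting on $\C$, $d$ is $=_E$-equivalent to some element of $\lirr(\ERcc)$, a finite set.

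The core step is to lift this semantic bound to a bound on the states of each $\aapprox^i$ by adapting the proof of Lemma~\ref{lem:cardComp}: one shows that every reachable state of $\aapprox^i$ $\epsifree$-recognizes at least one term of $\lirr(\ERcc)$. For a reachable $q$, pick any $s$ $\epsifree$-recognized at $q$ and reduce it, along the chain of rewrites given by sufficient completeness followed by $\ERcc$-normalization, to a term of $\lirr(\ERcc)$; at each step, one argues that the reduct is still $\epsifree$-recognized at $q$. For an $\Ecc$-step $u\theta\mapsto u|_p\theta$ the reduct is a subterm of $u\theta$, automatically recognized, so Lemma~\ref{lem:cardComp}'s argument applies verbatim. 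For an $E_\R$-step $l\theta\mapsto r\theta$, the recognition of $r\theta$ comes from the completion step that processed the corresponding critical pair $(l\rw r,\sigma,q)$: the call $\Norm_\Delta(r\sigma\rw q')$ added fresh transitions recognizing $r\sigma$, and the subsequent simplification by the equation $l=r\in E_\R$ identified the state of $r\sigma$ with $q$. Together with $\epsifree$-determinism and finiteness of $\lirr(\ERcc)$ this yields the desired bound.

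Termination then follows exactly as in Theorem~\ref{th:termFinite}: a uniform bound on the number of states entails a uniform bound on the number of possible transitions, forcing the monotonically growing completion sequence to stabilize. The main obstacle is precisely the $E_\R$-step of the core claim: we must argue that the equation $l=r$ is effectively available in the automaton, i.e.\ that both $l\theta$ and $r\theta$ are $\epsifree$-recognized there, even though $r$ is not a subterm of $l$ and so cannot be extracted from a subderivation for $l\theta$ as in Lemma~\ref{lem:cardComp}. This hinges on the completion-step mechanism having added exactly the required transitions, which is why the hypotheses of Theorem~\ref{th:termComplete} (sufficient completeness together with $E_\R\subseteq E$) suffice here in place of the purely syntactic contracting-on-$\F$ condition of Theorem~\ref{th:termFinite}.
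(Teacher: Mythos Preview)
Your proposal follows essentially the same two-stage argument as the paper: first bound the states recognising $\TC$-terms by $|\lirr(\ERcc)|$ via a Lemma~\ref{lem:cardComp}-style argument restricted to constructors, then use sufficient completeness of $\R$ together with $E_\R\subseteq E$ to argue that any state recognising a term of $\TF\setminus\TC$ must also recognise a $\TC$-term, so the same bound applies globally and termination follows as in Theorem~\ref{th:termFinite}. The paper is terser---it simply asserts that $E\supseteq E_\R$ ``guarantees that $s$ and $t$ will be recognised by equivalent states''---whereas you unfold the intended mechanism (completion adds transitions for $r\sigma$, then simplification with $l=r$ merges the two states).

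One caution about the part you yourself flag as ``the main obstacle''. Your justification for the $E_\R$-step (``recognition of $r\theta$ comes from the completion step that processed the corresponding critical pair'') is not literally valid for every $\epsifree$-reachable state of every $\aapprox^i$. A state freshly created during normalisation in the passage from $\aapprox^{i-1}$ to $\aapprox^i$ may $\epsifree$-recognise an instance $l'\theta'$ whose critical pair has \emph{not} yet been processed; then $r'\theta'$ need not be recognised anywhere in $\aapprox^i$, so the equation $l'=r'$ cannot fire and your reduction chain towards $\lirr(\ERcc)$ stalls at that step. The paper's proof sketch glosses over exactly the same point and defers details to the companion report. Your instinct that this is where the real work lies is correct; a fully rigorous version needs a slightly more global argument than the per-state reduction you sketch.
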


\begin{proof}
  Firstly, to show that the number of states recognizing terms
  of $\TC$ is finite we can do a proof similar to the one of
  Lemma~\ref{lem:cardComp} . Let $G\subseteq \TC$ be the finite set of normal forms
  of $\TC$ w.r.t. $\ERcc$. Since $E \supseteq \ETF \cup \Ecc$, like in the proof
  of Lemma~\ref{lem:cardComp}, we can show that in any completed automaton,
  terms of $\TC$ are recognized by no more states than terms in $G$. Secondly,
  since $\R$ is sufficiently complete, for all terms $s \in \TF \setminus \TC$ we
  know that there exists a term $t \in \TC$ such that $s \rwR^* t$. The
  fact that $E\supseteq E_\R$ guarantees that $s$ and $t$ will be recognized by
  equivalent states in the completed (and simplified) automaton. Since the
  number of states necessary to recognize $\TC$ is finite, so is the number of
  states necessary to recognize terms of $\TF$. $\qed$
\end{proof}

\noindent
Finally, to exploit the types of the functional program, we now see $\F$ as a
many-sorted signature whose set of sorts is $\sorts$. Each symbol $f\in
\F$ is associated to a profile $f : S_1 \times \ldots \times S_k \mapsto
S$ where $S_1, \ldots, S_k, S \in \sorts$ and $k$ is the arity of $f$. 
Well-sorted terms are inductively defined as follows: $f(t_1, \ldots, t_k)$ is a
well-sorted term of sort $S$ if $f : S_1 \times \ldots \times S_k \mapsto S$ and
$t_1, \ldots, t_k$ are well-sorted terms of sorts $S_1, \ldots, S_k$,
respectively. We denote by $\TFXS$, $\TFS$ and $\TCS$ the set of well-sorted
terms, ground terms and constructor terms, respectively. Note that we have
$\TFXS \subseteq \TFX$, $\TFS \subseteq \TF$ and $\TCS \subseteq \TC$.  
We assume that $\R$ and $E$ are {\em sort preserving}, {\em
  i.e.} that for all rule $l \rw r \in R$ and all equation $u=v \in E$, $l,r,u,v
\in \TFXS$, $l$ and $r$ have the same sort and so do $u$ and $v$. 
Note that well-typedness of the functional program entails the well-sortedness
of $\R$. We
still assume that the (sorted) TRS is sufficiently 
complete, which is defined in a similar way except that it holds only for
well-sorted terms, {\em i.e.}  for all $s\in \TFS$ there exists a term
$t\in\TCS$ such that $s \rwR^* t$.
We slightly refine the definition of contracting equations as follows. For all
sort $S$, if $S$ has a unique constant symbol we note it $c^S$.

\begin{definition}[Set $\Ecps$ of contracting equations for $\myF$ and $\sorts$]
Let $\myF \subseteq \F$. The set of well-sorted equations $\Ecps$ is {\em
  contracting} (for $\myF$) if its equations are of the form (a) $u= u|_p$ with $u$ linear and $p \neq \Lambda$, or
(b) $u=c^S$ with $u$ of sort $S$, and if the set of normal forms of $\myTFs$
w.r.t. the TRS $\ERcps=$ $\{ u \rw
v \sep u=v \in \Ecps \land (v=u|_p \lor v=c^S)\}$ is
finite.
\end{definition}

\noindent
The termination theorem for completion of sorted TRSs is similar to the
previous one except that
it needs $\R/E$-coherence of $\A_0$ to ensure that terms
recognized by completed automata are well-sorted (see~\cite{Genet-rep13} for proof).

\begin{theorem}
\label{th:termSort}
Let $\A_0$ be a tree automaton recognizing well-sorted terms, $\R$ a sufficiently
complete sort-preserving left-linear TRS and $E$ a sort-preserving set of
equations. If $E \supseteq \ETF \cup \Eccs \cup E_\R$ with $\Eccs$ contracting and $\A_0$ is
$\R/E$-coherent then completion of $\A_0$ by $\R$ and $E$  
terminates.
\end{theorem}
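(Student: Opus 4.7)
The plan is to follow the same two-step structure as the proof of Theorem~\ref{th:termComplete}, but prefaced by a well-sortedness invariant that lets us restrict every counting argument to $\TFS$ rather than $\TF$. Concretely, I would first show that every language $\Lang(\aapprox^i)$ produced by completion is contained in $\TFS$, then reuse (in sorted form) the Lemma~\ref{lem:cardComp}-style bound on states recognizing data-terms, and finally lift that bound from $\TCS$ to $\TFS$ through $E_\R$ and sufficient completeness.

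For the well-sortedness invariant, the hypotheses give us exactly what Theorem~\ref{correctness} requires: $\R$ is left-linear, $E$ is a set of equations, and $\A_0$ is $\R/E$-coherent. Hence $\Lang(\aapprox^i)\subseteq \desc_E(\Lang(\A_0))$ for every $i$. Since $\Lang(\A_0)\subseteq\TFS$ and both $\R$ and $E$ are sort-preserving, rewriting modulo $E$ cannot leave $\TFS$, so $\desc_E(\Lang(\A_0))\subseteq\TFS$. In particular every term recognized during completion is well-sorted, and it is enough to bound, for each $\aapprox^i$, the number of states reachable by well-sorted terms.

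I would then re-run the proof of Lemma~\ref{lem:cardComp} inside the sorted universe: working with $\ERccs$, $\myTFs$ and $\lirr(\ERccs)\cap\TCS$ in place of their unsorted counterparts. The only novelty is case~(b), equations $u = c^S$; when $u\theta\rwne q'$ in $\simpleq E \A$, the constant $c^S$ is recognized by some state $q''$ (up to adding the single transition $c^S \rw q''$ if needed; since $E\supseteq\ETF$ and simplification is performed to normal form, any two such states are merged), so the argument that every state recognizes a normal form of $\ERccs$ carries over verbatim. Because $\Eccs$ is contracting, the set of such normal forms in $\TCS$ is finite, giving a uniform bound on the number of states recognizing data-terms in each $\aapprox^i$. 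Finally, sufficient completeness provides, for every well-sorted $s\in\TFS\setminus\TCS$, a reduction $s\rw^*_\R t$ with $t\in\TCS$; since $E\supseteq E_\R$, each rewrite step $l\sigma\rw r\sigma$ becomes an equation that $\simp_E$-merges the states recognizing $l\sigma$ and $r\sigma$, so $s$ and $t$ end up in the same state. Combining this with the $\TCS$-bound yields a uniform finite bound on the states of $\aapprox^i$, which rules out the production of infinitely many new states and hence forces termination.

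The delicate point, and the one I would spend most care on, is checking that the sorted variant of Lemma~\ref{lem:etf_det} and of the simplification-to-normal-form argument really goes through for equations of the new shape $u=c^S$: we must be sure that no matter in which order $\simp_E$ fires, every state $q$ of $\simpleq E \A$ with $\Lange(\simpleq E \A,q)\cap\TFS\neq\emptyset$ recognizes some element of $\lirr(\ERccs)\cap\TFS$. This is where the well-sortedness invariant is crucial, because $c^S$ only replaces terms of the correct sort, and where $\R/E$-coherence of $\A_0$ (preserved along completion by Theorem~\ref{correctness}) prevents the simplification from collapsing states across sorts and producing unexpected new configurations.
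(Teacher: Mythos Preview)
Your proposal is correct and follows essentially the same route the paper sketches: the paper explicitly says the proof ``is similar to the previous one except that it needs $\R/E$-coherence of $\A_0$ to ensure that terms recognized by completed automata are well-sorted,'' which is exactly your Step~1 (invoke Theorem~\ref{correctness} plus sort-preservation to confine everything to $\TFS$) followed by a sorted replay of Theorem~\ref{th:termComplete} and Lemma~\ref{lem:cardComp}. You also correctly isolate the one genuinely new technical wrinkle---equations of shape $u=c^S$---that the unsorted argument does not cover; the paper defers this detail to the technical report, so your flagging it as the point requiring care is appropriate rather than a gap.
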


\subsection{Experiments}
\label{sec:experiments}
The objective of data-flow analysis is to predict the set of all program states
reachable from a language of initial function calls, {\em i.e. } to over-approximate
$\desc(\Lang(\A))$ where $\R$ represents the functional program and $\A$ the
language of initial function calls. In this setting, we automatically
compute an automaton $\aapprox^*$ over-approximating $\desc(\Lang(\A))$. But we
can do more. Since
we are dealing with left-linear TRS, it is possible to build $\airr{\R}$
recognizing $\lirr(\R)$. Finally, since tree automata are closed under all boolean
operations, we can compute an approximation of all the results of the function
calls by computing the tree automaton recognizing the intersection between
$\aapprox^*$ and $\airr{\R}$. 

Here is an example of application of those theorems.
Completions are performed using \timbuk.
All the $\airr{\R}$ automata and intersections were
performed using Taml. 
Details can be found in~\cite{GenetS-rep13}.

{\footnotesize
\begin{alltt}
\Ops append:2 rev:1 nil:0 cons:2 a:0 b:0   \Vars X Y Z U Xs    
\TRS R
append(nil,X)->X     append(cons(X,Y),Z)->cons(X,append(Y,Z))                         
rev(nil)->nil        rev(cons(X,Y))->append(rev(Y),cons(X,nil))

\Automaton A0 \States q0 qla qlb qnil qf qa qb \FinalStates q0 \Transitions
rev(qla)->q0         cons(qb,qnil)->qlb    cons(qa,qla)->qla    nil->qnil
cons(qa,qlb)->qla    a->qa                 cons(qb,qlb)->qlb    b->qb

\Equations E \Rules    cons(X,cons(Y,Z))=cons(Y,Z)  %%% Ec
%%% E_R                                     %%% E^r
append(nil,X)=X                             rev(X)=rev(X)   
append(cons(X,Y),Z)=cons(X,append(Y,Z))     cons(X,Y)=cons(X,Y)    
rev(nil)=nil                                append(X,Y)=append(X,Y)                     
rev(cons(X,Y))=append(rev(Y),cons(X,nil))   a=a  b=b  nil=nil                            
\end{alltt}
}

\noindent
In this example, the TRS $\R$ encodes the classical {\em reverse} and {\em
  append} functions. The language recognized by automaton $\A_0$ is the set of
terms of the form $rev([a, a, \ldots,$ $b, b, \ldots])$. Note that there are at
least one $a$ and one $b$ in the list. We assume that $\sorts=\{T,list\}$ and
sorts for symbols are the following: $a:T$, $b:T$, $nil: list$, $cons: T \times
list \mapsto list$, $append: list \times list \mapsto list$ and $rev: list
\mapsto list$. Now, to use Theorem~\ref{th:termSort}, we need to prove each of
its assumptions.  The set $E$ of equations contains $E_\R$, $\ETF$ and
$\Eccs$. The set of Equations $\Eccs$ is contracting because the automaton
$\airr{\ERccs}$ recognizes a finite language. This automaton can be computed
using Taml: it is the intersection between the automaton
$\A_{\TCS}$\footnote{Such an automaton has one
  state per sort and one transition per constructor. For instance, on our
  example $\A_{\TCS}$ will have transitions: $a \rw qT$, $b \rw qT$,
  $cons(qT,qlist) \rw qlist$ and $nil \rw qlist$.}  recognising $\TCS$ and the
automaton $\airr{\{cons(X,cons(Y,Z)) \rw cons(Y,Z)\}}$:

{\footnotesize
\begin{alltt}
\States q2 q1 q0 \FinalStates q0 q1 q2 
\Transitions b->q2 a->q2 nil->q1 cons(q2,q1)->q0
\end{alltt}
}

\noindent
The language of $\A_0$ is well-sorted and $E$ and $\R$ are sort preserving.  We can
prove sufficient completeness of $\R$ on $\TFS$ using, for instance,
Maude~\cite{maude-manual} or even \timbuk~\cite{Genet-RTA98} itself. 
The last
assumption of Theorem~\ref{th:termSort} to prove is that $\A_0$ is
$\R/E$-coherent. This can be shown by remarking that each state $q$ of $\A_0$
recognizes at least one term and if $s \rwAznestar q$ and $t
\rwAznestar q$ then $s \equiv_E t$. For instance $cons(b, cons(b,nil))
\rwAznestar q_{lb}$ and $cons(b, nil) \rwAznestar q_{lb}$ and $cons(b,
cons(b,nil))\equiv_E cons(b, nil)$. Thus, completion is guaranteed to terminate:
after 4 completion steps (7~ms) we obtain a fixpoint automaton $\aapprox^*$
with 11 transitions. To restrain the language to normal forms it is
enough to compute the intersection with $\lirr(R)$.
Since we are dealing with sufficiently complete TRSs, we know that
$\lirr(R) \subseteq \TCS$. Thus, we can use again $\A_{\TCS}$ for the
intersection that is:
 
{\footnotesize
\begin{alltt}
\States q3 q2 q1 q0  \FinalStates q3  \Transitions a->q0  nil->q1  b->q2  
cons(q0,q1)->q3  cons(q0,q3)->q3  cons(q2,q1)->q3  cons(q2,q3)->q3
\end{alltt}
}
 
\noindent
which recognizes any (non empty) flat list of $a$ and $b$. Thus, our analysis  preserved
the property that the result cannot be the empty list but lost the order of the
elements in the list. This is not surprising because
the equation {\tt \footnotesize cons(X, cons(Y, Z))=cons(X, Z)}
makes $cons(a, cons(b, nil))$ equal to $cons(a,nil)$. It is possible to refine
by hand $\Eccs$ using the following equations: {\tt \footnotesize
  cons(a,cons(a,X))=cons(a,X)}, {\tt \footnotesize cons(b,cons(b,X))=cons(b,X)},
  {\tt \footnotesize cons(a,cons(b,cons(a,X)))=cons(a,X)}.
This set of equations avoids the previous problem. Again, 
$E$ verifies the conditions of Theorem~\ref{th:termSort} and completion is
still guaranteed to terminate. The result is the automaton $\aapprox'^*$ having 19
transitions. This time, intersection with $\A_{\TCS}$ gives:
 
{\footnotesize
\begin{alltt}
\States q4 q3 q2 q1 q0  \FinalStates q4  \Transitions a->q1  b->q3  nil->q0  
cons(q1,q0)->q2  cons(q1,q2)->q2  cons(q3,q2)->q4  cons(q3,q4)->q4
\end{alltt}  
}
 
\noindent
This automaton exactly recognizes lists of the form $[b,b, \ldots,a, a, \ldots]$ with at
least one $b$ and one $a$, as expected. Hopefully, refinement of equations
can be automatized in completion~\cite{BoichutBGL-ICFEM12} and can be used here,
see~\cite{GenetS-rep13} for examples. More examples can be found in the
\timbuk~3.1 source distribution.

\section{Conclusion and further research}
\label{sec:further}

In this paper we defined a criterion on the set of approximation equations to
guarantee termination of the tree automata completion. When dealing with, so called,
functional TRS this criterion is close to what is generally expected in static
analysis and abstract interpretation: a finite model for an infinite
set of data-terms. This work is a first step to use
completion for static analysis of functional programs. There remains some
interesting points to address.

\medskip
\noindent
{\em Dealing with higher-order functions}. Higher-order
functions can be encoded into first order TRS using a simple encoding 
borrowed from~\cite{Jones-Book87}: defined symbols become constants, constructor
symbols remain the same, and an additional {\em application} operator '\arobase'
of arity 2 is introduced. On all the examples
of~\cite{OngR-POPL11}, 
completion and this simple encoding produces exactly the same
results~\cite{GenetS-rep13}. 

\medskip
\noindent
{\em Dealing with evaluation strategies}. The technique proposed here, as well
as~\cite{OngR-POPL11}, over-approximates the set of results for all evaluation
strategies. As far as we know, no static analysis technique for functional
programs can take into account evaluation strategies. 
However, it is possible to restrict the completion algorithm to
recognize only innermost descendants~\cite{GenetS-rep13}, {\em i.e.}
call-by-value results. If the approximation is precise enough, any non
terminating program with call-by-value will have an empty set of
results. An open research direction is to use this to prove non termination
of functional programs by call-by-value strategy.

\medskip
\noindent {\em Dealing with built-in types}. Values manipulated by {\em real} functional
programs are not always terms or trees. They can be numerals or be terms
embedding numerals. In~\cite{GenetGLM-CIAA13}, it has been shown that completion
can compute over-approximations of reachable terms embedding built-in
terms. The structural part of the term is approximated using tree automata
and the built-in part is approximated using lattices and abstract
interpretation.


\medskip
\noindent
Besides, there remain some interesting theoretical points to solve. In
section~\ref{sec:termCrit}, we saw that having a finite $\TF/_{=_E}$ is not enough to guarantee the termination of
completion. This is due to the fact that the simplification algorithm does not
merge all states recognizing $E$-equivalent terms. Having a simplification
algorithm ensuring this property is not trivial. First,  the theory defined 
by $E$ has to be decidable. Second, even if $E$ is decidable, finding all the 
$E$-equivalent terms recognized by the tree automaton is an open problem.
Furthermore, proving that $\TF/_{=_E}$ is finite, is itself difficult. This
question is undecidable in general~\cite{TisonPrivate}, but can be answered for
some particular $E$. For instance, if $E$ can be oriented into
a TRS $\R$ which is terminating, confluent and such that $\lirr(\R)$ is finite
then $\TF/_{=_E}$ is finite~\cite{TisonPrivate}.

\medskip
\noindent
{\bf Acknowledgments} Many thanks to the referees for their detailed comments.

\bibliographystyle{plain}
{\small \bibliography{sabbrev,eureca,genet}}

\begin{thebibliography}{10}

\bibitem{avispa}
A.~Armando, D.~Basin, Y.~Boichut, Y.~Chevalier, L.~Compagna, J.~Cuellar,
  P.~Hankes~Drielsma, P.-C. H\'eam, O.~Kouchnarenko, J.~Mantovani,
  S.~M\"odersheim, D.~von Oheimb, M.~Rusinowitch, J.~Santos~Santiago,
  M.~Turuani, L.~Vigan\`o, and L.~Vigneron.
\newblock {The AVISPA Tool for the automated validation of internet security
  protocols and applications}.
\newblock In {\em CAV'2005}, volume 3576 of {\em LNCS}, pages 281--285.
  Springer, 2005.

\bibitem{BaaderN-book98}
F.~Baader and T.~Nipkow.
\newblock {\em Term Rewriting and All That}.
\newblock Cambridge University Press, 1998.

\bibitem{BoichutBGL-ICFEM12}
Y.~Boichut, B.~Boyer, T.~Genet, and A.~Legay.
\newblock {E}quational {A}bstraction {R}efinement for {C}ertified {T}ree
  {R}egular {M}odel {C}hecking.
\newblock In {\em ICFEM'12}, volume 7635 of {\em LNCS}. Springer, 2012.

\bibitem{BoichutCHK-IJFCS09}
Y.~Boichut, R.~Courbis, P.-C. H{\'e}am, and O.~Kouchnarenko.
\newblock Handling non left-linear rules when completing tree automata.
\newblock {\em IJFCS}, 20(5), 2009.

\bibitem{BoichutGJL-RTA07}
Y.~Boichut, T.~Genet, T.~Jensen, and L.~Leroux.
\newblock Rewriting {A}pproximations for {F}ast {P}rototyping of {S}tatic
  {A}nalyzers.
\newblock In {\em RTA}, volume 4533 of {\em LNCS}, pages 48--62. Springer,
  2007.

\bibitem{maude-manual}
M.\ Clavel, F.\ Dur\'{a}n, S.\ Eker, P.\ Lincoln, N.\ Mart\'{\i}-Oliet, J.\
  Meseguer, and C.~L. Talcott.
\newblock {\em All About {Maude}, A High-Performance Logical Framework}, volume
  4350 of {\em Lecture Notes in Computer Science}.
\newblock Springer, 2007.

\bibitem{TATA}
H.~Comon, M.~Dauchet, R.~Gilleron, F.~Jacquemard, D.~Lugiez, C.~L\"{o}ding,
  S.~Tison, and M.~Tommasi.
\newblock Tree automata techniques and applications.
\newblock {\small {\tt http://tata.gforge.inria.fr}}, 2008.

\bibitem{FeuilladeGVTT-JAR04}
G.~Feuillade, T.~Genet, and V.~Viet Triem~Tong.
\newblock {R}eachability {A}nalysis over {T}erm {R}ewriting {S}ystems.
\newblock {\em Journal of Automated Reasonning}, 33 (3-4):341--383, 2004.

\bibitem{Genet-RTA98}
T.~Genet.
\newblock Decidable approximations of sets of descendants and sets of normal
  forms.
\newblock In {\em Proc.\ 9th RTA Conf., Tsukuba (Japan)}, volume 1379 of {\em
  LNCS}, pages 151--165. Springer-Verlag, 1998.

\bibitem{Genet-Habil}
T.~Genet.
\newblock Reachability analysis of rewriting for software verification.
\newblock Universit\'e de Rennes~1, 2009.
\newblock Habilitation document,
  \url{http://www.irisa.fr/celtique/genet/publications.html}.

\bibitem{Genet-rep13}
T.~Genet.
\newblock Towards {S}tatic {A}nalysis of {F}unctional {P}rograms using {T}ree
  {A}utomata {C}ompletion.
\newblock Technical report, INRIA, 2013.
\newblock \url{http://hal.archives-ouvertes.fr/hal-00921814/PDF/main.pdf}.

\bibitem{Genet-WRLA14}
T.~Genet.
\newblock Towards {S}tatic {A}nalysis of {F}unctional {P}rograms using {T}ree
  {A}utomata {C}ompletion.
\newblock In {\em Proceedings of WRLA'14}, LNCS. Springer, 2014.
\newblock To be published.

\bibitem{GenetGLM-CIAA13}
T.~Genet, T.~Le~Gall, A.~Legay, and V.~Murat.
\newblock A {C}ompletion {A}lgorithm for {L}attice {T}ree {A}utomata.
\newblock In {\em CIAA'13}, volume 7982 of {\em LNCS}, pages 134--145, 2013.

\bibitem{GenetR-JSC10}
T.~Genet and R.~Rusu.
\newblock Equational tree automata completion.
\newblock {\em Journal of {S}ymbolic {C}omputation}, 45:574--597, 2010.

\bibitem{GenetS-rep13}
T.~Genet and Y.~Salmon.
\newblock {Tree Automata Completion for Static Analysis of Functional
  Programs}.
\newblock Technical report, INRIA, 2013.
\newblock \url{http://hal.archives-ouvertes.fr/hal-00780124/PDF/main.pdf}.

\bibitem{GeserHWZ-RTA05}
A.~Geser, D.~Hofbauer, J.~Waldmann, and H.~Zantema.
\newblock On tree automata that certify termination of left-linear term
  rewriting systems.
\newblock In {\em RTA'05}, volume 3467 of {\em LNCS}, pages 353--367. Springer,
  2005.

\bibitem{Jacquemard-RTA96}
F.~Jacquemard.
\newblock Decidable approximations of term rewriting systems.
\newblock In H.~Ganzinger, editor, {\em Proc.\ 7th RTA Conf., New Brunswick
  (New Jersey, USA)}, pages 362--376. Springer-Verlag, 1996.

\bibitem{Jones-Book87}
N.~D. Jones.
\newblock Flow analysis of lazy higher-order functional programs.
\newblock In S.~Abramsky and C.~Hankin, editors, {\em Abstract Interpretation
  of Declarative Languages}, pages 103--122. Ellis Horwood, Chichester,
  England, 1987.

\bibitem{Kobayashi-JACM13}
N.~Kobayashi.
\newblock {M}odel {C}hecking {H}igher-{O}rder {P}rograms.
\newblock {\em Journal of the ACM}, 60.3(20), 2013.

\bibitem{Lisitsa-RTA12}
A.~Lisitsa.
\newblock {Finite Models vs Tree Automata in Safety Verification}.
\newblock In {\em RTA'12}, volume~15 of {\em LIPIcs}, pages 225--239, 2012.

\bibitem{OehlCKS-FASE03}
F.~Oehl, G.~C\'ec\'e, O.~Kouchnarenko, and D.~Sinclair.
\newblock {A}utomatic {A}pproximation for the {V}erification of {C}ryptographic
  {P}rotocols.
\newblock In {\em Proc. of FASE'03}, volume 2629 of {\em LNCS}, pages 34--48.
  Springer-Verlag, 2003.

\bibitem{OngR-POPL11}
L.~Ong and S.~Ramsay.
\newblock Verifying higher-order functional programs with pattern-matching
  algebraic data types.
\newblock In {\em POPL'11}, 2011.

\bibitem{Takai-RTA04}
T.~Takai.
\newblock {A} {V}erification {T}echnique {U}sing {T}erm {R}ewriting {S}ystems
  and {A}bstract {I}nterpretation.
\newblock In {\em Proc.\ 15th RTA Conf., Aachen (Germany)}, volume 3091 of {\em
  LNCS}, pages 119--133. Springer, 2004.

\bibitem{TisonPrivate}
S.~Tison.
\newblock Finiteness of the set of {$E$}-equivalence classes is undecidable,
  2010.
\newblock Private communication.

\end{thebibliography}

\end{document}